\newcommand{\bx}{x}
\newcommand{\by}{y}
\newcommand{\bz}{z}
\newtheorem{remark}{Remark}
\newtheorem{theorem}{Theorem}
\newtheorem{corollary}{Corollary}
\newtheorem{proposition}{Proposition}
\newtheorem{definition}{Definition}
\newcommand{\bff}{f}
\title{Control Synthesis of Nonlinear Sampled Switched Systems using Euler's Method}
\author{A. Le Co\"ent\footnote{corresponding author} \quad F. De Vuyst
\institute{CMLA, CNRS \& ENS Paris-Saclay}
\email{lecoent@cmla.ens-cachan.fr}
\and
L. Chamoin
\institute{LMT, CNRS \& ENS Paris-Saclay}
\and
L. Fribourg
\institute{LSV, CNRS, INRIA \& ENS Paris-Saclay}
}
\begin{document}
\maketitle

\begin{abstract}
In this paper, we propose a symbolic control synthesis method for nonlinear sampled switched systems whose vector fields are {\em one-sided Lipschitz}. The main idea is to use an approximate model obtained from the {\em forward Euler method} to build a guaranteed control. The benefit of this method is that the error introduced by symbolic modeling is bounded by choosing suitable time and space discretizations. The method is implemented in the interpreted language Octave. Several examples of the literature are performed and 
the results are compared with results obtained with a previous method based on the Runge-Kutta integration method.
\end{abstract}

\section{Introduction}
As said in \cite{NL_minimator},
in the methods of symbolic analysis and control of
hybrid systems, the way of representing sets of state values
and computing reachable sets for systems defined by
ordinary differential equations (ODEs) is fundamental
(see, e.g., \cite{Althoff2013a,girard2005reachability}).
An interesting approach appeared recently, based on the
propagation of reachable sets using guaranteed Runge-Kutta
methods with adaptive step size control (see \cite{BMC12,immler2015verified}). In
\cite{NL_minimator} such guaranteed
integration methods are used in the framework of {\em sampled switched systems}.

Given an ODE of the form $\dot{x}(t)=f(t,x(t))$, and a {\em set} of initial values $X_0$,
a symbolic (or ``set-valued'') integration method consists in computing a sequence of
approximations $(t_n, \tilde{x}_n)$ of the solution $x(t; x_0)$ of the ODE
with $x_0\in X_0$ such that $\tilde{x}_n \approx x(t_n; x_{n-1})$.
Symbolic integration methods extend classical {\em numerical} integration methods which correspond to the case where $X_0$ is just a singleton $\{x_0\}$.
The simplest numerical method is Euler's method in which $t_{n+1} = t_n + h$
for some step-size $h$ and $\tilde{x}_{n+1} = \tilde{x}_n + h f(t_n,\tilde{x}_n)$; so
the derivative of $x$ at time $t_n$, $f(t_n, x_n)$, is used as an
approximation of the derivative on the whole time interval. This method is very simple
and fast, but requires small step-sizes $h$.
More advanced
methods coming from the Runge-Kutta family use a few
intermediate computations to improve the approximation
of the derivative. The general form of an explicit $s$-stage
Runge-Kutta formula
of the form $\tilde{x}_{n+1}=\tilde{x}_n+h\Sigma_{i=1}^sb_ik_i$
where $k_i=f(t_n+c_ih, \tilde{x}_n+h\Sigma_{j=1}^{i-1}a_{ij}k_j)$
for $i=2,3,...,s$.
A challenging question
is then to compute a bound on the distance between
the true solution and the numerical solution, i.e.:
$\|x(t_n; x_{n-1}) - x_n\|$. This distance is associated to the {\em local
truncation error} of the numerical method.
In~\cite{NL_minimator}, 
such a bound is computed using the {\em Lagrange
remainders} of Taylor expansions.
This is achieved using {\em affine arithmetic} \cite{AffineA97}
(by application of the Banach's fixpoint theorem and 
Picard-Lindel\"of operator, see \cite{Nedialkov}). 
In the end, the Runge-Kutta based method of \cite{NL_minimator} is an elaborated method
that requires the use of affine arithmetic, 
Picard iteration and 
computation of Lagrange remainder.

In contrast, in this paper, we use ordinary arithmetic
(instead of affine arithmetic)
and a basic Euler scheme (instead of Runge-Kutta schemes). We
need neither estimate Lagrange remainders nor perform Picard iteration
in combination with Taylor series.
Our simple Euler-based approach is made possible  by having 
recourse to the notion of {\em one-sided Lipschitz} (OSL) function
\cite{Donchev98}.
This allows us to bound directly the {\em global error},
i.e. the distance between the approximate point~$\tilde{x}(t)$
computed by the Euler scheme
and the exact solution $x(t)$ for all $t\geq 0$
(see Theorem~\ref{th:1}).

{\bf Plan.}
In Section \ref{sec:rw}, we give  details on related work.
In Section \ref{sec:OSL}, we state our main result that bounds
the global error introduced by the Euler scheme in the context of systems
with OSL flows.
In Section~\ref{sec:appl}, we explain how to apply this result
to the synthesis of symbolic control of sampled switched systems.
We give numerical experiments and results
in Section \ref{sec:experiment} for five exampes of the literature,
and compare them with results obtained with the method of \cite{NL_minimator}.
We give final remarks in Section \ref{sec:fr}.
\section{Related work}\label{sec:rw}

Most of the recent work on the symbolic (or set-valued) integration of nonlinear ODEs is based
on the upper bounding of the Lagrange remainders either in the framework of 
Taylor series or Runge-Kutta schemes 
\cite{Althoff2013a,BCD13,BMC12,CAS12,chen2013flow,NL_minimator,Makino2009,report,dit2016validated}.
Sets of states are generally represented as vectors of intervals 
(or ``rectangles'')
and are manipulated  through interval arithmetic \cite{Moore66}
or affine arithmetic \cite{AffineA97}.
Taylor expansions with Lagrange remainders are also used in the work
of \cite{Althoff2013a}, which uses ``polynomial zonotopes'' 
for representing sets of states in addition to interval vectors.
None of these works uses
the Euler scheme nor the notion of one-sided Lipschitz constant.

In the literature on symbolic integration, the Euler scheme with OSL conditions
is explored in \cite{Donchev98,Lempio95}.
Our approach is similar but establishes an {\em analytical} result for the global error of Euler's estimate
(see Theorem \ref{th:1}) rather than analyzing, in terms of complexity,
the speed of convergence to zero, the accuracy and the stability of Euler's method.

In the control literature, OSL conditions
have been recently applied to control and stabilization 
\cite{Abbaszadeh2010,Cai2015},
but do not make use of Euler's method.
To our knowledge, our work applies for the first time Euler's scheme
with OSL conditions to the symbolic
control of hybrid systems.

\section{Sampled switched systems with OSL conditions}\label{sec:OSL}
\subsection{Control of switched systems}\label{ss:OSC}
Let us consider the nonlinear switched system
\begin{equation}
 \dot \bx(t) = f_{\sigma (t)}(\bx(t))
 \label{eq:sys}
\end{equation}
defined for all $t \geq 0$, where $\bx(t) \in \mathbb{R}^n$ is the state
of the system, $\sigma(\cdot) : \mathbb{R}^+ \longrightarrow U$ is the
switching rule. The finite set $U = \{ 1, \dots , N \}$ is the set of
switching {\em modes} of the system.  We focus on {\em sampled switched systems}:
given a sampling period $\tau >0$, switchings will occur at times
$\tau$, $2\tau$, \dots{} The switching rule~$\sigma(\cdot)$ is thus constant
on the time interval $\lbrack (k-1) \tau , k \tau )$ for $k \geq 1$.
For all $j\in U$, $f_j$ is a function from $\mathbb{R}^n$ to~$\mathbb{R}^n$.
We make the following hypothesis:
$$(H0)\quad  \mbox{ For all  $j\in U$, $f_j$ is a locally Lipschitz continuous map}.$$
As in \cite{girard2010approximately}, we make the assumption that the vector field $f_j$ is such that the solutions of the differential equation  (\ref{eq:sys}) are defined, e.g. by assuming that the support of the vector field $f_j$ is compact.
We will denote by $\phi_\sigma(t;\bx^0)$ the 
solution at time~$t$ of the system:

\begin{equation}
\begin{aligned}
  \dot \bx(t) & = \bff_{\sigma (t)}(\bx(t)), \\
  \bx(0) & =  \bx^0. \\
\end{aligned}
 \label{eq:sampled-sys}
\end{equation}

Often, we will consider $\phi_\sigma(t;x^0)$ on the interval $0\leq t<\tau$ for which $\sigma(t)$ is equal to a constant, say $j\in U$. In this case, we will abbreviate $\phi_\sigma(t;x^0)$ as $\phi_j(t;x^0)$. We will also consider $\phi_\sigma(t;x^0)$ on the interval $0\leq t<k\tau$
where $k$ is a positive integer, and 
$\sigma(t)$ is equal to a constant, say $j_{k'}$,
on each interval
$[(k'-1)\tau,k'\tau)$ with $1\leq k'\leq k$; in this case,
we will abbreviate $\phi_\sigma(t;x^0)$ as $\phi_\pi(t;x^0)$,
where $\pi$ is a sequence of $k$ modes (or ``pattern'') of the form
$\pi=j_1\cdot j_2\cdot\dots\cdot j_k$. 

We will assume that $\phi_\sigma$ is {\em continuous} at time $k\tau$ for all positive integer $k$.
This means that there is no ``reset'' at time $k'\tau$ ($1\leq k'\leq k$);
the value of 
$\phi_\sigma(t,x^0)$ for $t\in[(k'-1)\tau,k\tau]$
corresponds to the solution of $\dot{x}(u)=f_{j_{k'}}(x(u))$  for $u\in [0,\tau]$
with initial value $\phi_\sigma((k'-1)\tau;x^0)$.

Given a ``recurrence set'' $R\subset\mathbb{R}^n$ and a ``safety set'' $S\subset\mathbb{R}^n$ which contains $R$ ($R\subseteq S$), we are interested in the synthesis of a control such that:
starting from any initial point $x\in R$, the controlled trajectory always returns to $R$ within a bounded time while never leaving $S$. We suppose that sets $R$ and $S$ are compact. Furthermore, we suppose that $S$ is convex.



We denote by $T$ a compact overapproximation of the image by $\phi_j$ of $S$ for $0\leq t\leq \tau$ and $j\in U$, i.e.~$T$ is such that
$$ T\supseteq \{\phi_j(t;x^0) \ |\ j\in U, 0\leq t\leq\tau, x^0\in S\}.$$
The existence of $T$ is guaranteed by assumption $(H0)$. We know furthermore 
by $(H0)$ that, for all $j\in U$, there exists a constant $L_j>0$ such that:
\begin{equation}
\| \bff_j(\by)-\bff_j(\bx) \| \leq L_j \, \|\by-\bx\|\quad \forall \bx,\by\in S.
\label{eq:lipschitz}
\end{equation}
Let us define $C_j$ for all $j\in U$:
\begin{equation}
C_j = \sup_{x\in S}\  L_j\|f_j(x)\|
\quad
\text{for all} \quad j\in U.
\label{eq:L}
\end{equation}
We make the additional hypothesis 
that the mappings $f_j$ are {\em one-sided Lipschitz} (OSL)
\cite{Donchev98}.
Formally:
$$(H1) 
\quad \mbox{ For all $j \in U$, there exists a constant $\lambda_j\in \mathbb{R}$ such that}$$ 
\[
\langle \bff_j(\by)-\bff_j(\bx), \by-\bx \rangle \leq \lambda_j\, \|\by-\bx\|^2\quad
\forall \bx,\by\in T,
\]
where $\langle \cdot, \cdot\rangle$ denotes the scalar product of two vectors of $\mathbb{R}^n$.
%

\vspace{1em}

\begin{remark}
Constants $\lambda_j$, $L_j$ and $C_j$  ($j\in U$) can 
be computed using (constrained) optimization algorithms. See
Section \ref{sec:experiment} for details.
\end{remark}


\subsection{Euler approximate solutions}

Given an  initial point $\tilde{x}^0\in S$ and a mode $j\in U$, we define the following ``linear approximate solution''~$\tilde{\phi}_j(t;\tilde{x}^0)$ for $t$ on $[0,\tau]$ by:
\begin{equation}
\tilde{\phi}_j(t;\tilde{x}^0) = \tilde \bx^0 + t f_j(\tilde \bx^0).
\label{eq:grossier}
\end{equation}
\newline
\vspace{1em}
Note that formula~\eqref{eq:grossier} is nothing else but the explicit forward Euler scheme with
``time step'' $t$. It is thus a {consistent} approximation of order $1$ in $t$
of the exact solution of~\eqref{eq:sys}
under the hypothesis $\tilde \bx^0=\bx^0$.

More generally, given an initial point $\tilde{x}^0\in S$ and  pattern $\pi$ of $U^k$,
we can define a ``(piecewise linear) approximate solution'' 
$\tilde{\phi}_\pi(t;\tilde{x}^0)$ of $\phi_\pi$ at time $t\in[0,k\tau]$ as follows:
\begin{itemize}
\item $\tilde{\phi}_\pi(t;\tilde{x}^0) = t f_j(\tilde{x}^0) + \tilde{x}^0$ if $\pi=j\in U$, $k=1$ and $t\in[0,\tau]$, and

\item $\tilde{\phi}_{\pi}(k\tau+t;\tilde{x}^0) = t f_j(\tilde{z}) + \tilde{z}$
with $\tilde{z}=\tilde{\phi}_{\pi'}((k-1)\tau;\tilde{x}^0)$, if $k\geq 2$, 
$t\in[0,\tau]$,
$\pi=j\cdot \pi'$ for some $j\in U$ and $\pi'\in U^{k-1}$.
\end{itemize}


We wish to synthesize a guaranteed control $\sigma$ for $\phi_{\sigma}$
using the approximate functions $\tilde{\phi}_\pi$.
We define the closed ball of center $x\in\mathbb{R}^n$ and radius $r>0$, denoted $B(x,r)$, as the set $\{x'\in\mathbb{R}^n \ |\ \|x'-x\| \leq r\}$.

Given a positive real $\delta$, we now define the expression $\delta_j(t)$
which, as we will see in Theorem \ref{th:1}, represents (an upper bound on)
the error associated to $\tilde{\phi}_j(t; \tilde{x}^0)$
(i.e. $\|\tilde{\phi}_j(t; \tilde{x}^0)-\phi_j(t; x^0)\|$).

\begin{definition}\label{def:4}
Let $\delta$ be a positive constant. Let us define, for all $0\leq t\leq \tau$,
$\delta_j(t)$ as follows:
\begin{itemize}
\item  if $\lambda_j <0$:
$$\delta_j(t)=\left(\delta^2 e^{\lambda_j t}+
 \frac{C_j^2}{\lambda_j^2}\left(t^2+\frac{2 t}{\lambda_j}+\frac{2}{\lambda_j^2}\left(1- e^{\lambda_j t} \right)\right)\right)^{\frac{1}{2}}$$

\item if $\lambda_j = 0:$
$$\delta_j(t)= \left( \delta^2 e^{t} + C_j^2 (- t^2 - 2t + 2 (e^t - 1)) \right)^\frac{1}{2}$$


\item if $\lambda_j > 0:$
$$\delta_j(t)=\left(\delta^2 e^{3\lambda_j t}+
\frac{C_j^2}{3\lambda_j^2}\left(-t^2-\frac{2t}{3\lambda_j}+\frac{2}{9\lambda_j^2}
\left(e^{3\lambda_j t}-1\right)\right)\right)^{\frac{1}{2}}$$
\end{itemize}
\end{definition}

Note that $\delta_j(t)=\delta$ for $t=0$. 
The function $\delta_j(\cdot)$ depends implicitly on two parameters: $\delta\in\mathbb{R}$ and $j\in U$. In Section \ref{sec:appl}, we will use the notation $\delta'_j(\cdot)$
where the parameters are denoted by $\delta'$ and $j$.

\begin{theorem}\label{th:1}

Given
a sampled switched system satisfying (H0-H1), consider
a point $\tilde{x}^0$
and a positive real $\delta$. 
We have,
for all $x^0\in B(\tilde{x}^0,\delta)$, $t\in [0,\tau]$ and  $j\in U$:

$\phi_j(t;x^0)\in B(\tilde{\phi}_j(t;\tilde{x}^0),\delta_j(t))$.
\end{theorem}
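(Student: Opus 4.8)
The plan is to control the squared Euclidean error $w(t) := \|\phi_j(t;x^0) - \tilde\phi_j(t;\tilde x^0)\|^2$ by a scalar differential inequality and then compare it with the explicit function $\delta_j(t)^2$. First I would set $e(t) := \phi_j(t;x^0) - \tilde\phi_j(t;\tilde x^0)$. Since $\phi_j$ solves $\dot x = f_j(x)$ and, by \eqref{eq:grossier}, $\tfrac{d}{dt}\tilde\phi_j(t;\tilde x^0) = f_j(\tilde x^0)$ is constant in $t$, the map $e$ is $C^1$ and $e(0) = x^0 - \tilde x^0$, so $w(0) = \|x^0-\tilde x^0\|^2 \le \delta^2$ because $x^0\in B(\tilde x^0,\delta)$. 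Differentiating the squared norm (which is smooth, thereby avoiding the non-differentiability of $\|e\|$ at the origin) gives
\[
\tfrac12 \dot w(t) = \langle e(t),\, f_j(\phi_j(t;x^0)) - f_j(\tilde x^0)\rangle.
\]

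Next I would insert $\pm f_j(\tilde\phi_j(t;\tilde x^0))$ and split the right-hand side into $\langle e, f_j(\phi_j)-f_j(\tilde\phi_j)\rangle + \langle e, f_j(\tilde\phi_j)-f_j(\tilde x^0)\rangle$. In the first bracket the difference of arguments is exactly $e$, so the OSL hypothesis (H1) bounds it by $\lambda_j\|e\|^2$. In the second bracket, Cauchy--Schwarz together with the Lipschitz inequality \eqref{eq:lipschitz} and the identity $\tilde\phi_j(t;\tilde x^0)-\tilde x^0 = t f_j(\tilde x^0)$ give $\|e\|\,L_j t\|f_j(\tilde x^0)\| \le t C_j \|e\|$, using the definition \eqref{eq:L} of $C_j$. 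Hence
\[
\dot w \le 2\lambda_j w + 2 t C_j \sqrt{w}.
\]
To linearise the cross term I would apply Young's inequality $2tC_j\sqrt w \le \varepsilon w + \varepsilon^{-1} t^2 C_j^2$, choosing $\varepsilon$ according to the sign of $\lambda_j$: $\varepsilon = -\lambda_j$ when $\lambda_j<0$, $\varepsilon = \lambda_j$ when $\lambda_j>0$, and $\varepsilon = 1$ when $\lambda_j = 0$. This produces a linear differential inequality $\dot w \le \alpha w + \beta(t)$ with constant rate $\alpha\in\{\lambda_j,\,3\lambda_j,\,1\}$ and a forcing term $\beta(t)$ proportional to $t^2$.

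Finally I would close the argument by a comparison (Gronwall) lemma: multiplying by $e^{-\alpha t}$ and integrating shows $w(t) \le g(t)$ whenever $w(0)\le g(0)$ and $g$ solves the associated linear ODE $\dot g = \alpha g + \beta(t)$ with $g(0)=\delta^2$. A direct integration of this ODE in each regime recovers exactly the three closed forms for $\delta_j(t)^2$ in Definition~\ref{def:4}; since $w(0)\le\delta^2 = g(0)$, this yields $\|e(t)\|^2 = w(t) \le \delta_j(t)^2$, i.e. $\phi_j(t;x^0)\in B(\tilde\phi_j(t;\tilde x^0),\delta_j(t))$. I expect the main obstacle to be twofold: checking that $\delta_j(t)^2$ really solves the comparison ODE in each of the three cases (the algebra that matches the chosen Young parameter to the stated expressions), and justifying that the points $\phi_j(t;x^0)$, $\tilde\phi_j(t;\tilde x^0)$ and $\tilde x^0$ remain in the domains $T$ and $S$ on which (H1) and \eqref{eq:lipschitz} hold, which is precisely the role of the overapproximation $T\supseteq\{\phi_j(t;x^0)\}$ together with the inclusion $S\subseteq T$.
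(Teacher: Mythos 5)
Your proposal follows essentially the same route as the paper's proof: differentiate the squared error, insert $\pm f_j(\tilde\phi_j(t;\tilde x^0))$, bound the two terms via (H1) and Cauchy--Schwarz with \eqref{eq:lipschitz} and \eqref{eq:L}, then apply Young's inequality with a sign-dependent parameter and integrate the resulting linear differential inequality; your constant choice $\varepsilon\in\{-\lambda_j,\lambda_j,1\}$ is exactly equivalent to the paper's time-dependent $\alpha$ (namely $\varepsilon = C_j t\,\alpha$), yielding the same comparison ODEs and the same closed forms $\delta_j(t)^2$. The argument is correct, and your closing remark about needing $\tilde\phi_j(t;\tilde x^0)$ to stay in the domains where (H1) and \eqref{eq:lipschitz} hold is a point the paper itself passes over silently.
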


\begin{proof}

Consider on $t\in [0,\tau]$ the differential equations 
\[
\frac{d  \bx(t)}{dt} = \bff_j(x(t))
\]
and
\[
\frac{d \tilde \bx(t)}{dt} = \bff_j(\tilde \bx^0).
\]
with initial points $x^0\in S,\tilde{x}^0\in S$ respectively.
We will abbreviate $\phi_j(t;x^0)$ 
(resp. $\tilde{\phi}_j(t;\tilde{x}^0)$) as $x(t)$ (resp.~$\tilde{x}(t)$).
We have
\[
\frac{d}{dt}(\bx(t)-\tilde\bx(t)) =  \left( \bff_j(\bx(t))-\bff_j(\tilde\bx^0)\right),
\]
then
\begin{eqnarray*}
\frac{1}{2}\, \frac{d}{dt}(\|\bx(t)-\tilde \bx(t)\|^2) &=& 
\left\langle \bff_j(\bx(t))-\bff_j(\tilde\bx^0),  \bx(t)-\tilde \bx(t) \right\rangle \\ 
&=& \left\langle \bff_j(\bx(t))-\bff_j(\tilde \bx(t))+\bff_j(\tilde \bx(t))
                      -\bff_j(\tilde\bx^0),  \bx(t)-\tilde \bx(t) \right\rangle\\
&=& \left\langle \bff_j(\bx(t))-\bff_j(\tilde \bx(t)), \bx(t)-\tilde \bx(t) \right\rangle
+\left\langle \bff_j(\tilde \bx(t))                      -\bff_j(\tilde\bx^0),  \bx(t)-\tilde \bx(t) \right\rangle\\
&\leq& \left\langle \bff_j(\bx(t))-\bff_j(\tilde \bx(t)), \bx(t)-\tilde \bx(t) \right\rangle
+ \| \bff_j(\tilde \bx(t))                      -\bff_j(\tilde\bx^0)\| \| \bx(t)-\tilde \bx(t) \|.
\end{eqnarray*}
The last expression has been obtained using the Cauchy-Schwarz inequality. Using $(H1)$ and (\ref{eq:lipschitz}), we have 
%
\begin{eqnarray*}
\frac{1}{2}\, \frac{d}{dt}(\|\bx(t)-\tilde \bx(t)\|^2)
&\leq& \lambda_j \|\bx(t)-\tilde\bx(t)\|^2
+\, \| \bff_j( \tilde \bx(t))- \bff_j(\tilde\bx^0)\|\,  \|\bx(t)-\tilde \bx(t)\|
 \\
&\leq& \lambda_j \|\bx(t)-\tilde\bx(t)\|^2
+L_j\, \|\tilde \bx(t)-\tilde\bx^0\|\,  \|\bx(t)-\tilde \bx(t)\| \\
&\leq& \lambda_j \|\bx(t)-\tilde\bx(t)\|^2
+L_j t\, \|\bff_j(\tilde\bx^0)\|\,  \|\bx(t)-\tilde \bx(t)\|. \\
\end{eqnarray*}
Using \eqref{eq:L} and a Young inequality, we then have
\begin{eqnarray*}
\frac{1}{2}\, \frac{d}{dt}(\|\bx(t)-\tilde \bx(t)\|^2)
&\leq& \lambda_j \|\bx(t)-\tilde\bx(t)\|^2
+C_j\,t\, \|\bx(t)-\tilde \bx(t)\| \\ 
&\leq& \lambda_j \|\bx(t)-\tilde\bx(t)\|^2
+C_j\,t\, \frac{1}{2}
\left( \alpha \|\bx(t)-\tilde \bx(t)\|^2 + \frac{1}{ \alpha } 
\right)
\end{eqnarray*}
for all $ \alpha  >0$.

\begin{itemize}
 \item In the case $\lambda_j <0$:

For $t>0$, we choose $ \alpha >0$ such that
$C_j t \alpha  = -\lambda_j$, 
i.e.
$ \alpha  = -\frac{\lambda_j}{C_j\, t}$.
It follows, for all $t\in [0,\tau]$:
\[
\frac{1}{2}\, \frac{d}{dt}(\|\bx(t)-\tilde \bx(t)\|^2) \leq
\frac{\lambda_j}{2} \|\bx(t)-\tilde\bx(t)\|^2 - \frac{C_j t}{2 \alpha }
=\frac{\lambda_j}{2} \|\bx(t)-\tilde\bx(t)\|^2 - \frac{(C_j t)^2}{2\lambda_j}.
\]
We thus get:
%
\[
\|\bx(t)-\tilde \bx(t)\|^2 \leq \|\bx^0-\tilde \bx^0\|^2\, e^{\lambda_j t}
+ \frac{C_j^2}{\lambda_j^2}\left(t^2+\frac{2 t}{\lambda_j}+\frac{2}{\lambda_j^2}\left(1- e^{\lambda_j t} \right)\right).
\]

 
%

 \item In the case $\lambda_j >0$:
 
For $t>0$, we choose $ \alpha >0$ such that
$C_j t \alpha  = \lambda_j$, 
i.e.
$ \alpha  = \frac{\lambda_j}{C_j\, t}$.
It follows, for all $t\in [0,\tau]$:
\[
\frac{1}{2}\, \frac{d}{dt}(\|\bx(t)-\tilde \bx(t)\|^2) \leq
\frac{3\lambda_j}{2} \|\bx(t)-\tilde\bx(t)\|^2 + \frac{C_j t}{2 \alpha }
=\frac{3\lambda_j}{2} \|\bx(t)-\tilde\bx(t)\|^2 + \frac{(C_j t)^2}{2\lambda_j}.
\]
We thus get:
%
\[
\|\bx(t)-\tilde \bx(t)\|^2 \leq \|\bx^0-\tilde \bx^0\|^2\, e^{3\lambda_j t}+
\frac{C_j^2}{3\lambda_j^2}\left(-t^2-\frac{2t}{3\lambda_j}+\frac{2}{9\lambda_j^2}
\left(e^{3\lambda_j t}-1\right)\right)\]


 
%

\item In the case $\lambda_j =0$:

For $t>0$, we choose $ \alpha = \frac{1}{C_j t}$. It follows:
$$\frac{d}{dt}(\|\bx(t)-\tilde \bx(t)\|^2)
 \leq   \|\bx(t)-\tilde \bx(t)\|^2 + C_jt^2
 $$
 
We thus get:
$$ \|x(t)-\tilde{x}(t)\|^2 \leq \|x^0-\tilde{x}^0\|^2 e^{t} + C_j^2 (- t^2 - 2t + 2 (e^t - 1)) $$

In every case, since by hypothesis $x^0\in B(\tilde{x}^0,\delta)$ (i.e. $\| x^0 - \tilde{x}^0\|^2 \leq \delta^2$),
we have, for all $t\in [0,\tau]$:
\[
\|x(t)-\tilde{x}(t)\| \leq 
\delta_j(t).
 \]

 \end{itemize}

 It follows: $\phi_j(t;x^0)\in B(\tilde{\phi}_j(t;\tilde{x}^0), \delta)$ for $t\in [0,\tau]$.

\end{proof}

\vspace{1em}

\begin{remark}
In Theorem \ref{th:1}, we have supposed that the step size $h$ used in Euler's method was equal to the sampling period $\tau$ of the switching system.
Actually, in order to have better approximations, it is often convenient
to take a {\em fraction} of $\tau$ as for $h$ (e.g., $h=\frac{\tau}{10}$).
Such a splitting is called ``sub-sampling'' in numerical methods.
See
Section \ref{sec:experiment} for details.
\end{remark}

\vspace{1em}

\begin{corollary}\label{cor:1}

Given a sampled switched system 
satisfying (H0-H1), consider
a point $\tilde{x}^0\in S$, a real $\delta>0$ and a mode $j\in U$ such that:
\begin{enumerate}
\item $B(\tilde{x}^0,\delta)\subseteq S$,
\item $B(\tilde{\phi}_j(\tau;\tilde{x}^0),\delta_j(\tau))\subseteq S$, and
\item $\frac{d^2(\delta_j(t))}{dt^2}>0$ for all $t\in [0,\tau]$.
\end{enumerate}
Then we have, for all $x^0\in B(\tilde{x}^0,\delta)$ and $t\in[0,\tau]$:\ 
%
$\phi_j(t;x^0)\in S$.

\end{corollary}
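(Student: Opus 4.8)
The plan is to combine Theorem~\ref{th:1} with the convexity of $S$ and of the radius function $\delta_j$. By Theorem~\ref{th:1}, for every $x^0\in B(\tilde{x}^0,\delta)$ and every $t\in[0,\tau]$ we have $\phi_j(t;x^0)\in B(\tilde{\phi}_j(t;\tilde{x}^0),\delta_j(t))$. Hence it suffices to prove the \emph{set} inclusion $B(\tilde{\phi}_j(t;\tilde{x}^0),\delta_j(t))\subseteq S$ for all $t\in[0,\tau]$; the desired conclusion $\phi_j(t;x^0)\in S$ then follows immediately.

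To establish this inclusion I would exploit two facts. First, the Euler center $\tilde{\phi}_j(t;\tilde{x}^0)=\tilde{x}^0+t\,f_j(\tilde{x}^0)$ is \emph{affine} in $t$, so writing $s=t/\tau\in[0,1]$ it is the convex combination $(1-s)\,\tilde{\phi}_j(0;\tilde{x}^0)+s\,\tilde{\phi}_j(\tau;\tilde{x}^0)$. Second, hypothesis~(3), namely $\frac{d^2(\delta_j(t))}{dt^2}>0$ on $[0,\tau]$, says that $\delta_j$ is \emph{convex}, so it lies below its chord: $\delta_j(t)\le(1-s)\,\delta_j(0)+s\,\delta_j(\tau)=(1-s)\delta+s\,\delta_j(\tau)$, using the identity $\delta_j(0)=\delta$ noted after Definition~\ref{def:4}.

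I would then combine these through the behaviour of balls under Minkowski combination. For weights $1-s$ and $s$,
\[
(1-s)\,B(\tilde{\phi}_j(0;\tilde{x}^0),\delta)+s\,B(\tilde{\phi}_j(\tau;\tilde{x}^0),\delta_j(\tau))
= B\!\left(\tilde{\phi}_j(t;\tilde{x}^0),\ (1-s)\delta+s\,\delta_j(\tau)\right),
\]
since a convex combination of two balls is the ball whose center and radius are the corresponding convex combinations of the centers and radii. The convexity bound then yields $B(\tilde{\phi}_j(t;\tilde{x}^0),\delta_j(t))\subseteq B(\tilde{\phi}_j(t;\tilde{x}^0),(1-s)\delta+s\,\delta_j(\tau))$, so the tube at time $t$ is contained in that Minkowski combination. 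Concretely, any $y$ in the tube can be written as $y=(1-s)\,y_0+s\,y_\tau$ with $y_0\in B(\tilde{x}^0,\delta)$ and $y_\tau\in B(\tilde{\phi}_j(\tau;\tilde{x}^0),\delta_j(\tau))$, by splitting the displacement $y-\tilde{\phi}_j(t;\tilde{x}^0)$ in proportion to $\delta$ and $\delta_j(\tau)$.

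Finally, by hypotheses~(1) and~(2) both end balls lie in $S$, so $y_0,y_\tau\in S$, and since $S$ is convex the point $y=(1-s)\,y_0+s\,y_\tau$ lies in $S$ as well. This proves $B(\tilde{\phi}_j(t;\tilde{x}^0),\delta_j(t))\subseteq S$ for all $t\in[0,\tau]$, which is what we needed. The main obstacle is the geometric splitting step: recognizing that the linearity of the Euler center together with the convexity of $\delta_j$ forces the entire tube into the convex hull of its two end balls; once that is seen, everything reduces to a direct appeal to Theorem~\ref{th:1} and to the convexity of $S$. The only edge case, $t=\tau$ (where $s=1$), is covered directly by hypothesis~(2).
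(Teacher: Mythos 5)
Your proof is correct and takes essentially the same route as the paper's: use hypotheses 1--2 to place the endpoint balls inside $S$, use the convexity of $\delta_j$ (hypothesis 3) together with the convexity of $S$ to contain the whole tube $B(\tilde{\phi}_j(t;\tilde{x}^0),\delta_j(t))$, $t\in[0,\tau]$, in $S$, and then conclude with Theorem~\ref{th:1}. The only difference is one of detail: you make explicit the geometric step (affinity of the Euler center in $t$, the chord bound on $\delta_j$, and the Minkowski combination of the two end balls) that the paper's proof asserts in a single sentence without justification.
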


\begin{proof}
By items 1 and 2, 
$B(\tilde{\phi}_j(t;\tilde{x}^0),\delta_j(t))$ for $t=0$ and $t=\tau$.
Since $\delta_j(\cdot)$ is convex on~$[0,\tau]$ by item~3, and $S$ is convex,
we have $B(\tilde{\phi}_j(t;\tilde{x}^0),\delta_j(t))\subseteq S$
for all $t\in[0,\tau]$. It follows from Theorem \ref{th:1}
that $\phi_j(t;x^0)\in B(\tilde{\phi}_j(t;\tilde{x}^0),\delta_j(t))\subseteq S$
for all $1\leq t\leq\tau$.
\end{proof}

\vspace{1em}

\begin{remark}
Condition 3 of Corollary \ref{cor:1} on the convexity of $\delta_j(\cdot)$ 
on $[0,\tau]$ can be established again using an optimization function
(see Section \ref{sec:experiment}).
\end{remark}


\section{Application to control synthesis}\label{sec:appl}

Consider a point~$\tilde{x}^0\in S$, a positive real  $\delta$ 
and a pattern $\pi$ of length $k$.  
Let $\pi(k')$ denote the $k'$-th element (mode) of~$\pi$ for $1\leq k'\leq k$.
Let us abbreviate
the $k'$-th approximate point
$\tilde{\phi}_{\pi}(k'\tau;\tilde{x}^{0})$ 
as~$\tilde{x}_\pi^{k'}$ for $k'=1,...,k$, 
and let $\tilde{x}_\pi^{k'}=\tilde{x}^0$ for $k'=0$. It is easy to show that
$\tilde{x}_\pi^{k'}$ can be defined recursively for $k'=1,...,k$, by:
$\tilde{x}_\pi^{k'}=\tilde{x}_\pi^{k'-1}+\tau f_{j}(\tilde{x}_\pi^{k'-1})$
with $j=\pi(k')$.

Let us now
denote by  $\delta_\pi^{k'}$
(an upper bound on) the error associated to $\tilde{x}_\pi^{k'}$,
i.e. $\|\tilde{x}_\pi^{k'}- \phi_\pi(k'\tau;x^0)\|$.
Using repeatedly Theorem \ref{th:1},
$\delta_{\pi}^{k'}$ can be defined recursively
as follows:

For $k'=0$: $\delta_{\pi}^{k'}=\delta$,
and for $1\leq k'\leq k$: $\delta^{k'}_{\pi}=\delta'_j(\tau)$
where $\delta'$ denotes $\delta^{k'-1}_{\pi}$, and $j$ denotes
$\pi(k')$.\\
Likewise, for $0\leq t\leq k\tau$, let us
denote by $\delta_{\pi}(t)$
(an upper bound on) the 
global error associated to $\tilde{\phi}_\pi(t;\tilde{x}^0)$
(i.e.~$\|\tilde{\phi}_\pi(t;\tilde{x}^0)- \phi_\pi(t;x^0)\|$).
Using Theorem \ref{th:1},
$\delta_{\pi}(t)$ can be defined itself as follows:
\begin{itemize}
\item for $t=0$:\ $\delta_\pi(t)=\delta$,
\item for $0<t\leq k\tau$:\ 
$\delta_{\pi}(t)=\delta'_{j}(t')$ with 
$\delta'=\delta_\pi^{\ell-1}$, $j=\pi(\ell)$,
$t'=t-(\ell-1)\tau$ and
$\ell=\lceil \frac{t}{\tau}\rceil$.
%
\end{itemize}
Note that, for $0\leq k'\leq k$, we have: 
$\delta_{\pi}(k'\tau)=\delta_\pi^{k'}$. We have:

\begin{theorem}
Given a sampled switched system satisfying (H0-H1),
%
consider a point~$\tilde{x}^0\in S$, a positive real  $\delta$ 
and a pattern $\pi$ of length $k$ such that, for all $1\leq k'\leq k$:
\begin{enumerate}
\item $B(\tilde{x}_\pi^{k'}, \delta_{\pi}^{k'}) \subseteq S$ and
\item $\frac{d^2(\delta'_j(t))}{dt^2}>0$ for all $t\in [0,\tau]$, with $j=\pi(k')$ and $\delta'=\delta_\pi^{k'-1}$.
\end{enumerate}
Then we have, for all $x^0\in B(\tilde{x}^0,\delta)$ and $t\in [0,k\tau]$:\ \ 
$\phi_{\pi}(t;x^0)\in S$.
\label{propbis:1}
\end{theorem}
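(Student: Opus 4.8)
The plan is to reduce this multi-step statement to repeated applications of the single-step results (Theorem~\ref{th:1} and Corollary~\ref{cor:1}) by an induction on the step index~$k'$. The central idea is \emph{ball propagation}: the precise endpoint containment furnished by Theorem~\ref{th:1} at the end of one sampling interval supplies exactly the initial ball needed to launch the next interval, so that the errors compose according to the recursive definition of~$\delta_\pi^{k'}$.

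Concretely, I would first prove by induction on $k'\in\{0,1,\dots,k\}$ the auxiliary claim that for every $x^0\in B(\tilde{x}^0,\delta)$ one has $\phi_\pi(k'\tau;x^0)\in B(\tilde{x}_\pi^{k'},\delta_\pi^{k'})$. The base case $k'=0$ is immediate, since $\phi_\pi(0;x^0)=x^0$ and $(\tilde{x}_\pi^0,\delta_\pi^0)=(\tilde{x}^0,\delta)$. For the inductive step, set $j=\pi(k')$ and $y^0=\phi_\pi((k'-1)\tau;x^0)$. By the continuity (no-reset) assumption on $\phi_\sigma$, the trajectory on $[(k'-1)\tau,k'\tau]$ coincides with the mode-$j$ flow $s\mapsto\phi_j(s;y^0)$ for $s\in[0,\tau]$. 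The induction hypothesis gives $y^0\in B(\tilde{x}_\pi^{k'-1},\delta_\pi^{k'-1})$, so applying Theorem~\ref{th:1} with initial point $\tilde{x}_\pi^{k'-1}$, radius $\delta'=\delta_\pi^{k'-1}$ and mode $j$ yields $\phi_j(s;y^0)\in B(\tilde{\phi}_j(s;\tilde{x}_\pi^{k'-1}),\delta'_j(s))$ for all $s\in[0,\tau]$. Evaluating at $s=\tau$ and invoking the recursions $\tilde{x}_\pi^{k'}=\tilde{\phi}_j(\tau;\tilde{x}_\pi^{k'-1})$ and $\delta_\pi^{k'}=\delta'_j(\tau)$ closes the induction.

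It remains to establish that the trajectory never leaves $S$, which I would handle subinterval by subinterval. On $[(k'-1)\tau,k'\tau]$ the flow is again the mode-$j$ flow starting from $y^0\in B(\tilde{x}_\pi^{k'-1},\delta_\pi^{k'-1})$, and hypotheses~1 and~2 of the present theorem are exactly the hypotheses of Corollary~\ref{cor:1} for this step: the required start-ball inclusion $B(\tilde{x}_\pi^{k'-1},\delta_\pi^{k'-1})\subseteq S$ is the instance of hypothesis~1 at index $k'-1$ (for $k'=1$ it reads $B(\tilde{x}^0,\delta)\subseteq S$), while the endpoint inclusion $B(\tilde{x}_\pi^{k'},\delta_\pi^{k'})\subseteq S$ and the convexity $\frac{d^2(\delta'_j(t))}{dt^2}>0$ are hypotheses~1 at index $k'$ and~2, respectively. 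Corollary~\ref{cor:1} then gives $\phi_j(s;y^0)\in S$ for all $s\in[0,\tau]$, i.e.\ $\phi_\pi(t;x^0)\in S$ for $t\in[(k'-1)\tau,k'\tau]$. Since every $t\in[0,k\tau]$ belongs to one such subinterval, the conclusion follows.

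The main obstacle is the chaining itself: one must check that the ball in which the previous step deposits the exact state is precisely the ball from which the single-step theorem is permitted to restart, with the radius evolving under the map $\delta'\mapsto\delta'_j(\tau)$. This is what turns Theorem~\ref{th:1} into an inductive engine rather than a one-shot bound, and some care is needed to keep the parameter identification $\delta'=\delta_\pi^{k'-1}$ of Definition~\ref{def:4} aligned at each stage.
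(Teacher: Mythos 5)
Your proposal is correct and follows essentially the same route as the paper, whose entire proof reads ``By induction on $k$ using Corollary~\ref{cor:1}'': your ball-propagation induction via Theorem~\ref{th:1} together with the per-interval safety argument via Corollary~\ref{cor:1} is exactly the argument the paper leaves implicit, written out in full. The only wrinkle you correctly flag is that the start-ball inclusion $B(\tilde{x}^0,\delta)\subseteq S$ needed at step $k'=1$ is not literally among the stated hypotheses (which cover only $1\leq k'\leq k$); this is an imprecision in the theorem's statement itself, not a gap in your argument.
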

\begin{proof}
By induction on $k$ using Corollary~\ref{cor:1}.
%
\end{proof}

The statement of Theorem \ref{propbis:1} is illustrated in Figure \ref{fig:tube} for $k=2$. From Theorem \ref{propbis:1}, it easily follows:
\begin{figure}[h]
\centering
 \includegraphics[width=0.4\linewidth,clip,trim=1cm 0cm 1cm 0cm]{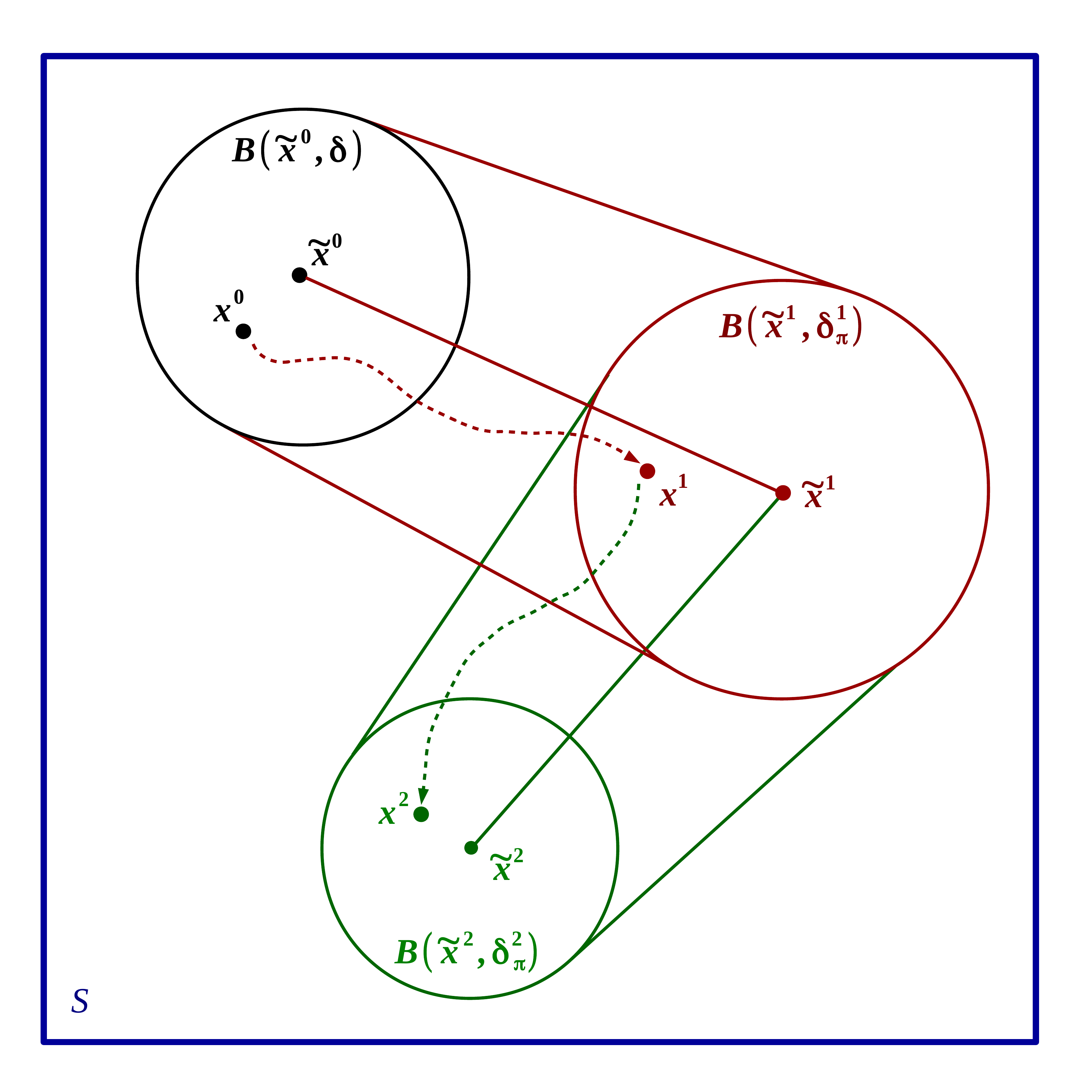}
\caption{Illustration of Theorem \ref{propbis:1}.}
 \label{fig:tube}
\end{figure}

\begin{corollary}
Given a switched system satisfying (H0-H1), consider
a positive real $\delta$ 
and a finite set of points
$\tilde{x}_1,\dots\tilde{x}_m$ of $S$ such that all the balls $B(\tilde{x}_i,\delta)$ 
cover $R$ and are included into~$S$ (i.e. $R\subseteq \bigcup_{i=1}^mB(\tilde{x}_i,\delta)\subseteq S$).
Suppose furthermore that, for all $1\leq i\leq m$, there exists a pattern $\pi_i$ of length $k_i$ such that:
\begin{enumerate}
\item $B((\tilde{x}_i)_{\pi_i}^{k'},\delta_{\pi_i}^{k'}) \subseteq S$,
for all $k'=1,\dots,k_i-1$
\item 
$B((\tilde{x}_i)_{\pi_i}^{k_i}, \delta_{\pi_i}^{k_i}) \subseteq R.$
\item $\frac{d^2(\delta'_j(t))}{dt^2}>0$ 
with $j=\pi_i(k')$ and $\delta'=\delta_{\pi_i}^{k'-1}$, for all 
$k'\in\{1,...,k_i\}$ and $t\in [0,\tau]$.
\end{enumerate}
These properties induce a control
$\sigma$\footnote{Given an initial point $x\in R$, the induced control $\sigma$ corresponds to a sequence
of patterns $\pi_{i_1},\pi_{i_2},\dots$ defined as follows:
Since  $x\in R$, there exists a 
a point $\tilde{x}_{i_1}$  with $1\leq i_1\leq m$ such that $x\in B(\tilde{x}_{i_1},\delta)$; then using pattern $\pi_{i_1}$, one has: $\phi_{\pi_{i_1}}(k_{i_1}\tau;x)\in R$. Let $x'=\phi_{\pi_{i_1}}(k_{i_1}\tau;x)$; there exists a point $\tilde{x}_{i_2}$ with $1\leq i_2\leq m$ such that $x'\in B(\tilde{x}_{i_2},\delta)$, etc.}
which guarantees
\begin{itemize}
\item (safety): if $x\in R$, then $\phi_{\sigma}(t;x) \in S$ for all $t\geq 0$,
and
\item (recurrence):
if $x\in R$ then $\phi_{\sigma}(k\tau;x)\in R$ for some $k\in\{k_1,\dots,k_m\}$.
\end{itemize}
\label{propter:1}
\end{corollary}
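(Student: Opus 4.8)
The plan is to derive both guarantees by concatenating the finite-horizon safety result of Theorem~\ref{propbis:1} along the sequence of patterns selected by the induced control $\sigma$. First I would check that $\sigma$, as described in the footnote, is well-defined: whenever the controlled trajectory lies in $R$, the covering hypothesis $R\subseteq\bigcup_{i=1}^m B(\tilde{x}_i,\delta)$ yields an index $i$ with the current point in $B(\tilde{x}_i,\delta)$, and hence a pattern $\pi_i$ to apply over the next $k_i$ steps.

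The heart of the proof is a single-segment analysis. Fix $y\in R$ and an index $i$ with $y\in B(\tilde{x}_i,\delta)$. I would first observe that conditions~1 and~2 of the corollary together furnish hypothesis~1 of Theorem~\ref{propbis:1}, i.e. $B((\tilde{x}_i)_{\pi_i}^{k'},\delta_{\pi_i}^{k'})\subseteq S$ for every $k'\in\{1,\dots,k_i\}$: condition~1 handles $k'=1,\dots,k_i-1$, and condition~2 handles $k'=k_i$ because $R\subseteq S$. Condition~3 is precisely hypothesis~2 of Theorem~\ref{propbis:1}. Applying that theorem gives $\phi_{\pi_i}(t;y)\in S$ for all $t\in[0,k_i\tau]$. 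For the return to $R$, I would use Theorem~\ref{th:1} on the final sub-step, along which $\delta_{\pi_i}^{k_i}$ is by construction the error bound associated with $(\tilde{x}_i)_{\pi_i}^{k_i}=\tilde{\phi}_{\pi_i}(k_i\tau;\tilde{x}_i)$; this gives $\phi_{\pi_i}(k_i\tau;y)\in B((\tilde{x}_i)_{\pi_i}^{k_i},\delta_{\pi_i}^{k_i})\subseteq R$ by condition~2.

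I would then conclude by induction on the number of applied patterns. Setting $x^{(0)}=x\in R$ and $x^{(\ell)}=\phi_{\pi_{i_\ell}}(k_{i_\ell}\tau;x^{(\ell-1)})$, where $i_\ell$ is the index chosen at the point $x^{(\ell-1)}\in R$, the single-segment analysis shows inductively that each $x^{(\ell)}\in R$ and that safety holds on the segment $[\,\sum_{r<\ell}k_{i_r}\tau,\ \sum_{r\leq\ell}k_{i_r}\tau\,]$. The recurrence claim is then immediate, since $\phi_\sigma(k_{i_1}\tau;x)=x^{(1)}\in R$ with $k_{i_1}\in\{k_1,\dots,k_m\}$.

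The only genuinely non-mechanical point---and the main obstacle---is bridging the finite horizon $[0,k_i\tau]$ of Theorem~\ref{propbis:1} to the infinite horizon $t\geq 0$ required for safety. This is exactly what the recurrence property buys: it lets the finite-horizon argument restart from a fresh point of $R$ at each time $\sum_{r\leq\ell}k_{i_r}\tau$. Since every $k_{i_\ell}\geq 1$, these cumulative times diverge to $+\infty$, so any prescribed $t\geq 0$ lies in one of the segments above, and safety there follows from the corresponding application of Theorem~\ref{propbis:1}.
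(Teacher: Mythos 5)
Your proposal is correct and takes essentially the paper's own route: the paper offers no explicit proof, stating only that the corollary ``easily follows'' from Theorem~\ref{propbis:1}, and your argument—checking conditions 1--3 supply the hypotheses of that theorem on each segment (using $R\subseteq S$), invoking the recursively defined error bounds from Theorem~\ref{th:1} for the return to $R$, and inducting over concatenated patterns with the observation that the switching times diverge—is exactly the natural filling-in of that claim.
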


Corollary \ref{propter:1} gives the theoretical foundations of the following method for synthesizing $\sigma$ ensuring recurrence in $R$ and safety in $S$:
\begin{itemize}
\item we (pre-)compute $\lambda_j, L_j, C_j$ for all $j\in U$;
\item we find $m$ points $\tilde{x}_1,\dots\tilde{x}_m$ of $S$
and $\delta>0$ such that $R\subseteq \bigcup_{i=1}^m B(\tilde{x}_i,\delta)\subseteq S$;
\item we find $m$ patterns $\pi_i$  ($i=1,...,m$)
such that conditions 1-2-3 of Corollary \ref{propter:1} are satisfied.
\end{itemize} 
A covering of $R$ with balls as stated in Corollary \ref{propter:1} is illustrated in Figure \ref{fig:tiling}.
The control synthesis method based on~Corollary \ref{propter:1}
is illustrated in Figure \ref{fig:post} (left)
together with an illustration of method of \cite{NL_minimator} (right).

\begin{figure}[h]
\centering
 \includegraphics[width=0.4\linewidth,clip,trim=1cm 0cm 1cm 0cm]{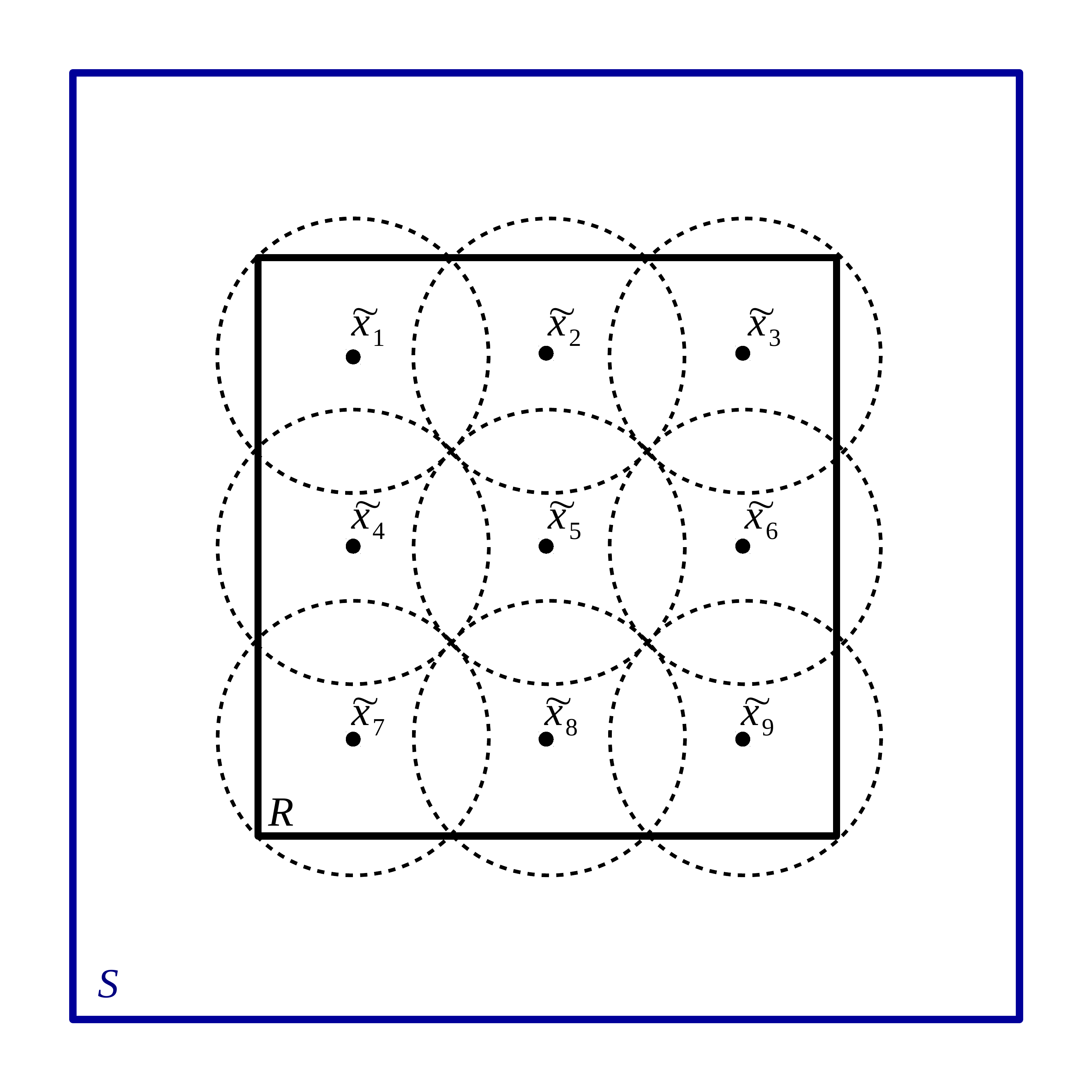}
\caption{A set of balls covering $R$ and contained in $S$.}
\label{fig:tiling}
\end{figure}


\begin{figure}[h]
\centering
\begin{tabular}{cc}
 \includegraphics[width=0.4\linewidth,clip,trim=1cm 0cm 1cm 0cm]{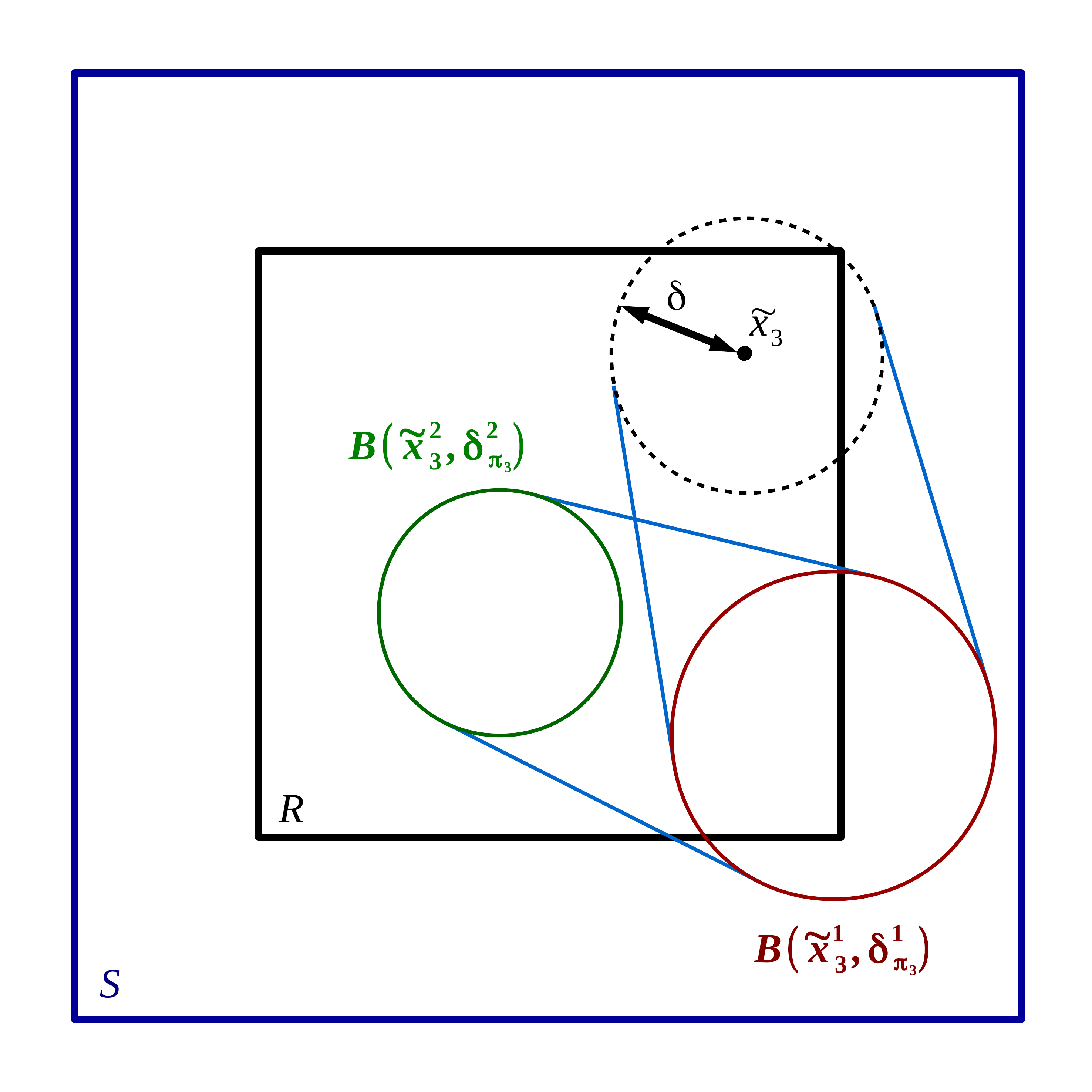}
&
 \includegraphics[width=0.4\linewidth,clip,trim=1cm 0cm 1cm 0cm]{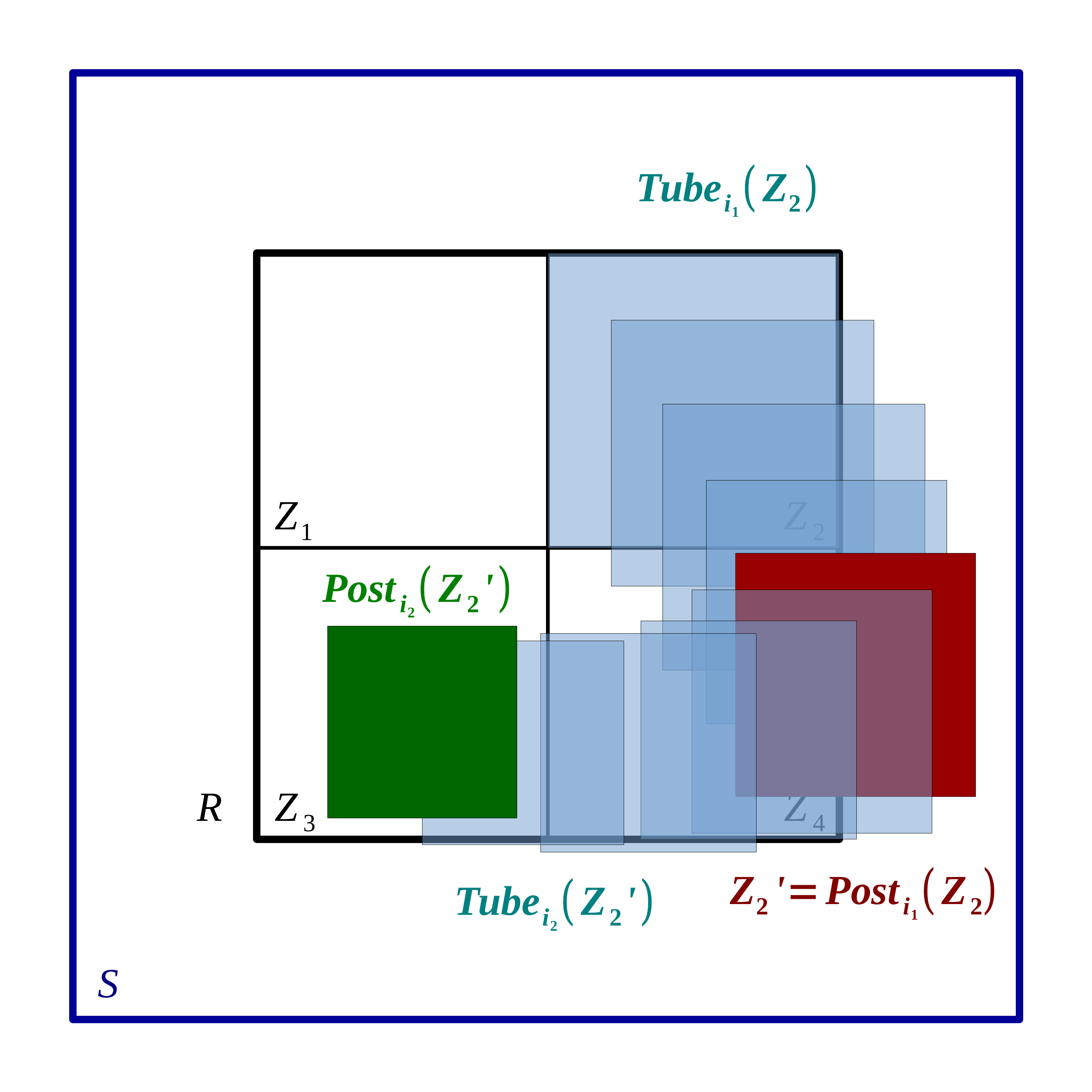}
\end{tabular}
\caption{Control of ball $B(\tilde x_3,\delta)$ with our method (left);  
control of tile $Z_2$ with the method of \cite{NL_minimator}~(right).}
\label{fig:post}
\end{figure}

\section{Numerical experiments and results}
\label{sec:experiment}
This method has been implemented in the interpreted language Octave, and the experiments performed on a 2.80 GHz Intel Core i7-4810MQ CPU with 8 GB
of memory.

The computation of constants $L_j$, $C_j$, $\lambda_j$ 
($j\in U$) are realized with
a constrained optimization algorithm.
They are performed using the ``sqp'' function of Octave, applied on the following 
optimization problems:
\begin{itemize}
 \item Constant $L_j$:
 $$ L_j = \max_{{x,y}\in S,\  x\neq y} \frac{\| f_j(y) - f_j(x) \|}{\| y - x \|}  $$
 \item Constant $C_j$:
 $$ C_j = \max_{{x}\in S} L_j \| f_j(x) \|$$
 \item Constant $\lambda_j$:
 $$ \lambda_j = \max_{{x,y}\in T,\  x\neq y} \frac{\langle f_j(y) - f_j(x), y - x \rangle}{\|y - x \|^2 }$$
 \end{itemize}

Likewise, the convexity test
$\frac{d^2(\delta'_j(t))}{dt^2}>0$ can be performed similarly.

Note that in some cases, it is advantageous to use a time sub-sampling to compute the image of a ball.
Indeed, because of the exponential growth of the radius $\delta_j(t)$ within time,
computing a sequence of balls can lead to smaller ball images. 
It is particularly advantageous when a constant $\lambda_j$ is negative.
We illustrate this with the example of the DC-DC converter. 
It has two switched modes, for which we have $\lambda_1 = -0.014215$
and $\lambda_2 = 0.142474$. 
In the case $\lambda_j < 0$, the associated formula $\delta_j(t)$ has the behavior 
of Figure \ref{fig:delta_t_pos} (a). 
In the case $\lambda_j > 0$, the associated formula $\delta_j(t)$ has the behavior 
of Figure \ref{fig:delta_t_pos} (b). In the case $\lambda_j < 0$, 
if the time sub-sampling is small enough, 
one can compute a sequence of balls with reducing radius, which makes the synthesis easier. 

\begin{figure}[h]
\centering
\begin{tabular}{cc}
\includegraphics[width=0.45\linewidth,clip,trim=0.7cm 0cm 1.0cm 0cm]{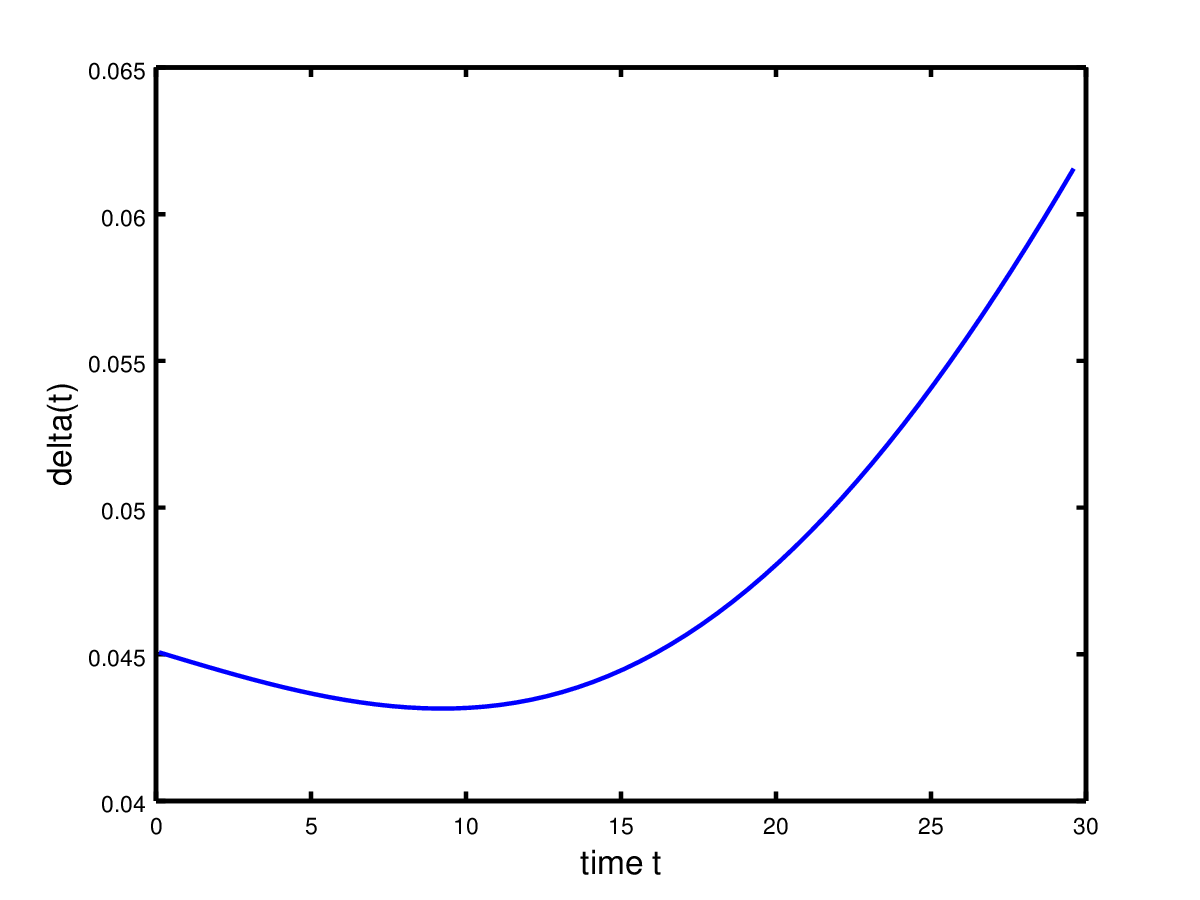} & \includegraphics[width=0.45\linewidth,clip,trim=0.7cm 0cm 1.0cm 0cm]{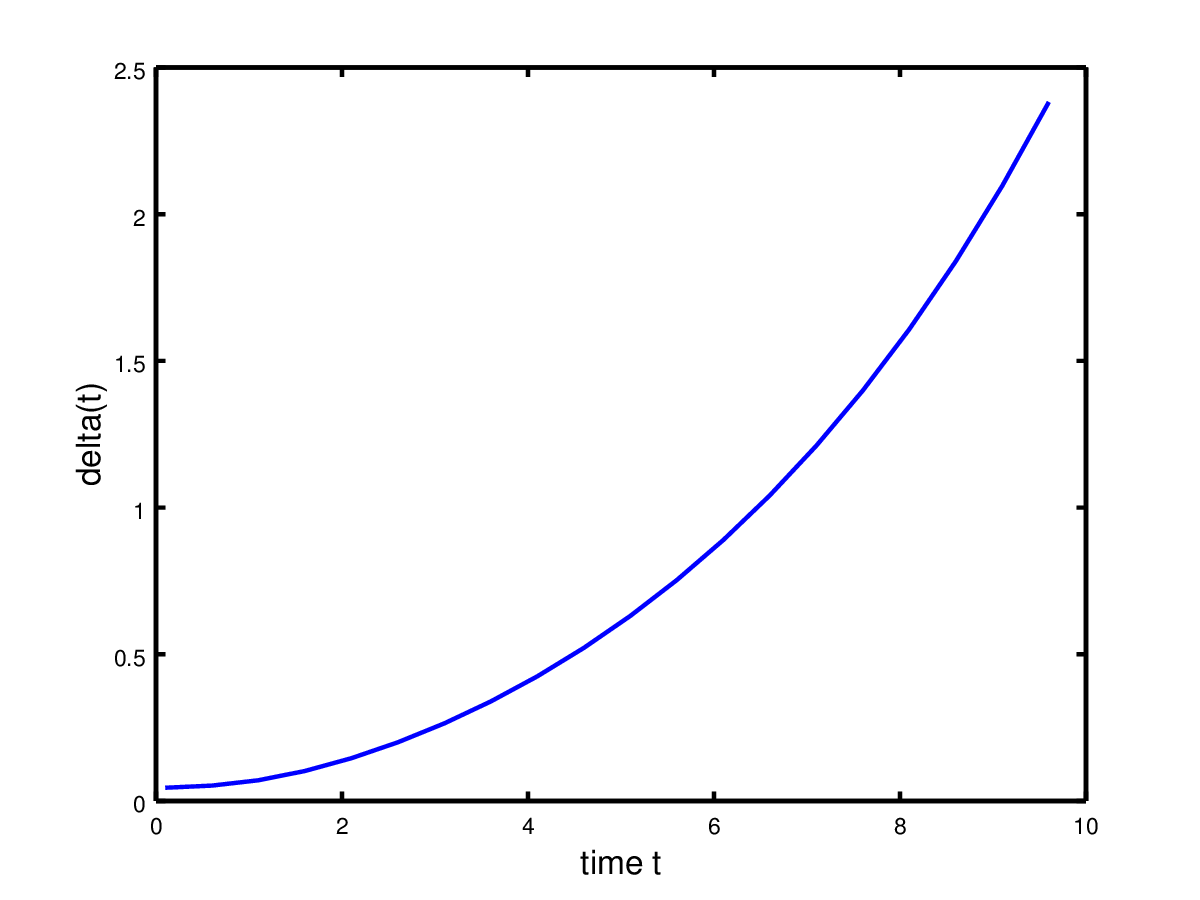} \\
(a) & (b)
\end{tabular}
 \caption{Behavior of $\delta_j(t)$ for the DC-DC converter with $\delta_j(0) = 0.045$. 
 (a) Evolution of $\delta_1(t)$ (with $\lambda_1<0$); 
 (b) Evolution of $\delta_2(t)$ (with $\lambda_2>0$). }
  \label{fig:delta_t_pos}
 \end{figure}

In the following, we give the results obtained with our Octave implementation 
of this Euler-based method on 5 examples, and compare them with those 
given by the C++ implementation {\em DynIBEX} \cite{dynibex} of the Runge-Kutta based 
method used in \cite{NL_minimator}.

 \subsection{Four-room apartment}
 We describe a first application on a 4-room 16-switch building ventilation case study adapted from
\cite{meyer:tel-01232640}. The model has been simplified in order to get constant
parameters.
The system is a four room apartment
subject to heat transfer between the rooms, with the external
environment, the underfloor, and human beings.  The dynamics
of the system is given by the following equation:
\begin{equation*}
 \frac{d T_i}{dt} = \sum_{j \in \mathcal{N}^\text{*} \setminus \{i\}} a_{ij} (T_j -
 T_i) + \delta_{s_i} b_i (T_{s_i}^4 - T_i ^4 )  + c_i
 \max\left(0,\frac{V_i - V_i^\text{*}}{\bar{ V_i} -
   V_i^{\text{*}}}\right)(T_u - T_i), \quad \mbox{for } i=1,...,4.
\end{equation*}

The state of the system is given by the temperatures in the rooms
$T_i$, for $i \in \mathcal{N} = \{ 1 , \dots , 4 \}$.  Room~$i$ is
subject to heat exchange with different entities stated by the indices
$\mathcal{N}^\text{*} = \{1,2,3,4,u,o,c \}$.
We have $T_0=30, T_c=30, T_u=17$, $\delta_{s_i}=1$ for $i\in\mathcal{N}$.
The (constant) parameters $T_{s_i}$, $V_i^\text{*}$, $\bar V_i$, $a_{ij}$, $b_i$,
$c_i$ are given in~\cite{meyer:tel-01232640}. 
%
The control input is $V_i$ ($i \in \mathcal{N}$).
In the experiment, $V_1$ and $V_4$ can take the values $0$V
or $3.5$V, and $V_2$ and~$V_3$ can take the values $0$V or $3$V. This
leads to a system of the form~\eqref{eq:sys} with $\sigma(t) \in U =\{
1, \dots, 16 \}$, the $16$ switching modes corresponding to the
different possible combinations of voltages $V_i$.  
The sampling period is $\tau = 30$s.
Compared simulations are given in Figure \ref{fig:simu}.
On this example, the Euler-based method works better than {\em DynIBEX}
in terms of CPU time.

 \begin{table}[h]
 \centering
\begin{tabular}{|c|c|c|}
   \hline 
   &\multicolumn{1}{c|}{Euler} & \multicolumn{1}{c|}{DynIBEX} \\
   \hline
   $R$ & \multicolumn{2}{c|}{$[20,22]^2\times[22,24]^2$} \\
   $S$ & \multicolumn{2}{c|}{$[19,23]^2\times[21,25]^2$} \\   
\hline
$\tau$ & \multicolumn{2}{c|}{30} \\
\hline
Time subsampling & No & \\   
   \hline
 Complete control & Yes  & Yes \\
\hline
$\max_{j= 1, \dots,16} \lambda_j$  &  $-6.30\times 10^{-3}$   &        \\
$\max_{j= 1, \dots,16} C_j$  &  $4.18\times 10^{-6}$ &                     \\
\hline
Number of balls/tiles & 4096 & 252 \\
Pattern length & 1 & 1 \\
\hline
CPU time &  63 seconds & 249 seconds\\ \hline
  \end{tabular}
\label{table:4M}
\caption{Numerical results for the four-room example.}
 \end{table}

\begin{figure}[h]
\centering
\begin{tabular}{cc}
 \includegraphics[width=0.4\linewidth,clip,trim=1cm 0cm 1cm 0cm]{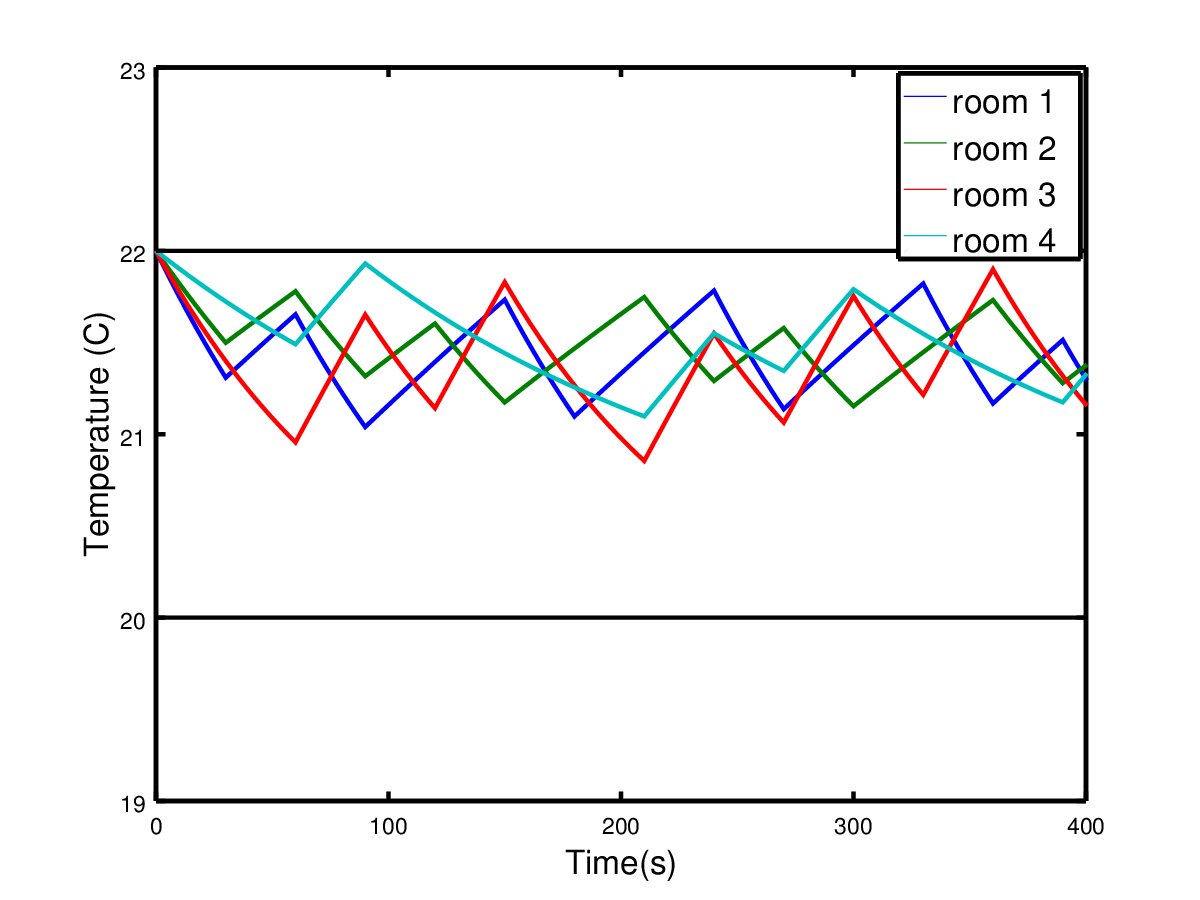}
&
 \includegraphics[width=0.4\linewidth,clip,trim=1cm 0cm 1cm 0cm]{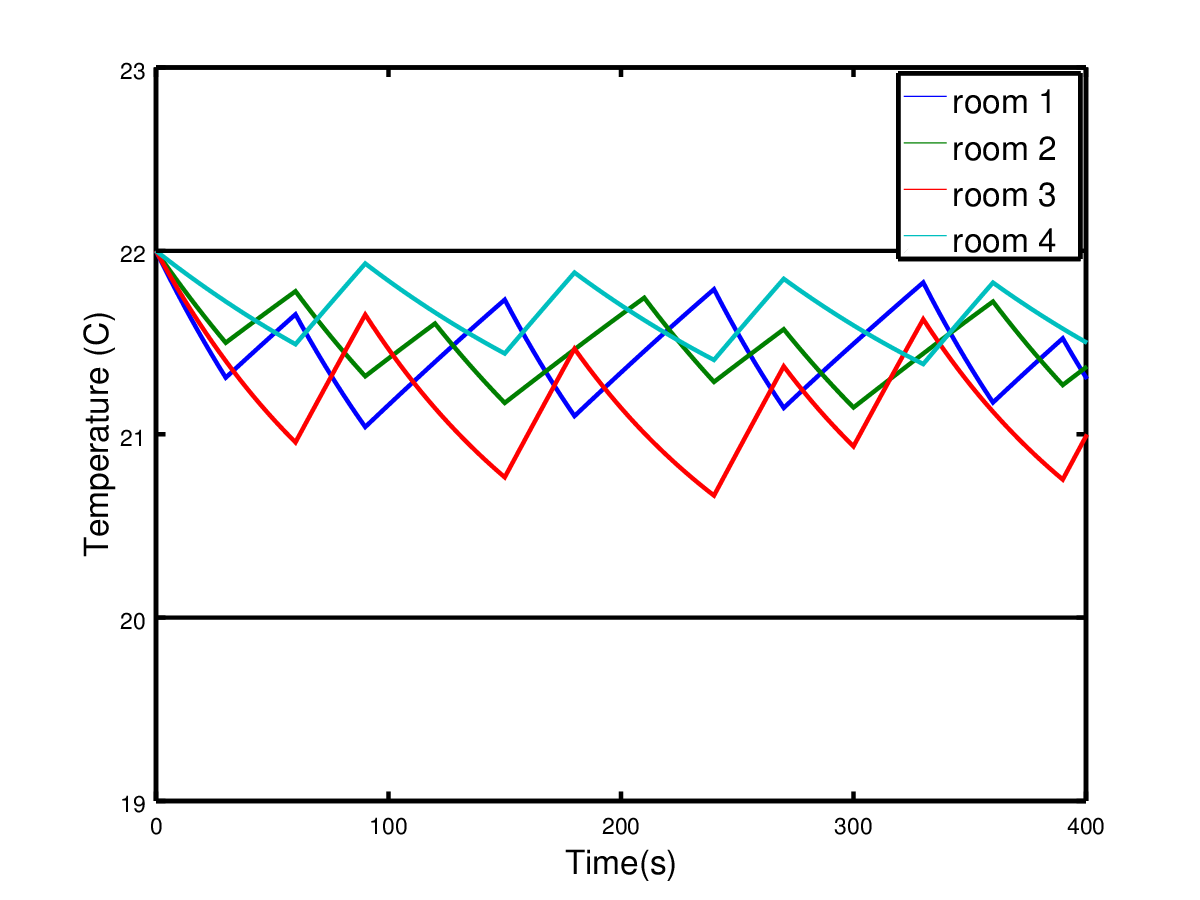}
\end{tabular}
\caption{Simulation of the four-room case study with our synthesis method (left) and with the synthesis method of \cite{NL_minimator}  (right).}
\label{fig:simu}
\end{figure}

\subsection{DC-DC converter}

This linear example is taken from \cite{beccuti2005optimal} and has
already been treated with the state-space bisection method in a linear
framework in \cite{fribourg2014finite}.

The system is a boost DC-DC converter with one switching cell.  There
are two switching modes depending on the position of the switching
cell. The dynamics is given by the equation $\dot x (t) =
A_{\sigma(t)} x(t) + B_{\sigma(t)}$ with $\sigma(t) \in U = \{ 1,2
\}$. The two modes are given by the matrices:

$$ A_1 = \left( \begin{matrix}
          - \frac{r_l}{x_l} & 0 \\ 0 & - \frac{1}{x_c} \frac{1}{r_0 + r_c}
         \end{matrix} \right)  \quad B_1 = \left( \begin{matrix}
         \frac{v_s}{x_l} \\ 0 \end{matrix} \right) $$

$$ A_2 = \left( \begin{matrix} - \frac{1}{x_l} (r_l +
  \frac{r_0.r_c}{r_0 + r_c}) & - \frac{1}{x_l} \frac{r_0}{r_0 + r_c}
  \\ \frac{1}{x_c}\frac{r_0}{r_0 + r_c} & - \frac{1}{x_c}
  \frac{r_0}{r_0 + r_c}
         \end{matrix} \right)  \quad B_2 = \left( \begin{matrix}
         \frac{v_s}{x_l} \\ 0 \end{matrix} \right)  $$

with $x_c = 70$, $x_l = 3$, $r_c = 0.005$, $r_l = 0.05$, $r_0 = 1$,
$v_s = 1$.  The sampling period is $\tau = 0.5$.  The parameters are
exact and there is no perturbation.  We want the state to return
infinitely often to the region~$R$, set here to $\lbrack 1.55 , 2.15
\rbrack \times \lbrack 1.0 , 1.4 \rbrack$, while never going out of
the safety set $S = \lbrack 1.54 , 2.16 \rbrack \times \lbrack 0.99 ,
1.41 \rbrack$.
On this example, the Euler-based method {\em fails} while {\em DynIBEX} succeeds
rapidly.

\begin{table}[h]
 \centering
\begin{tabular}{|c|c|c|}
\hline 
 &\multicolumn{1}{c|}{Euler} & \multicolumn{1}{c|}{DynIBEX} \\
\hline
$R$ & \multicolumn{2}{c|}{$[1.55,2.15]\times[1.0,1.4]$} \\
$S$ & \multicolumn{2}{c|}{$[1.54,2.16]\times[0.99,1.41]$} \\
\hline
$\tau$ &\multicolumn{2}{c|}{ 0.5 }\\
\hline
 Complete control & No & Yes\\
\hline
$\lambda_1$  & $-0.014215$ &\\
$\lambda_2$  & $0.142474$ &\\
$C_{1}$  & $6.7126 \times 10^{-5}$ &\\
$C_{2}$ &  $2.6229 \times 10^{-2}$ &\\
\hline
Number of balls/tiles & x & 48 \\
\hline
Pattern length & x & 6 \\
\hline
CPU time & x & < 1 second \\ \hline
 \end{tabular}
\label{table:DC}
\caption{Numerical results for the DC-DC converter example.}
 \end{table}

 \subsection{Polynomial example}
 We consider the polynomial system taken from \cite{liu2013synthesis}:
\begin{equation}
 \left \lbrack \begin{matrix}
  \dot x_1 \\ \dot x_2
 \end{matrix} \right \rbrack  =
 \left \lbrack \begin{matrix} -x_2 - 1.5 x_1 - 0.5 x_1^3 + u_1 \\ x_1 + u_2 
   \end{matrix} \right \rbrack.
\end{equation}
The control inputs are given by $u = (u_1,u_2) =
K_{\sigma(t)}(x_1,x_2)$, $\sigma(t) \in U = \{ 1,2,3,4 \}$, which correspond to
four different state feedback controllers $K_1(x) = (0,-x_2^2 + 2)$,
$K_2(x) = (0,-x_2)$, $K_3(x) = (2,10)$, $K_4(x) = (-1.5,10)$.  We thus
have four switching modes. The disturbances are not taken into account.
The objective is to visit infinitely often {\em two} zones $R_1$ and $R_2$,
without going out of a safety zone $S$.

 \begin{table}[h]
 \centering
\begin{tabular}{|c|c|c|}
 \hline 
 &\multicolumn{1}{c|}{Euler} & \multicolumn{1}{c|}{DynIBEX} \\
\hline
 $R_1$ & \multicolumn{2}{c|}{$[-1,0.65]\times[0.75,1.75]$} \\
 $R_2$ & \multicolumn{2}{c|}{$[-0.5,0.5]\times[-0.75,0.0]$ }\\
 $S$ &  \multicolumn{2}{c|}{$[-2.0,2.0]\times[-1.5,3.0]$ }\\
 \hline
$\tau$ & \multicolumn{2}{c|}{0.15} \\
\hline
Time subsampling & $\tau/20$ & \\
\hline
 Complete control & Yes & Yes \\
\hline
$\lambda_1$  & $-1.5$   &   \\
$\lambda_2$  &  $-1.0$ &\\
$\lambda_3$  &  $-1.1992 \times 10^{-8}$ & \\
$\lambda_4$ &  $-5.7336 \times 10^{-6}$ & \\
$C_{1}$  &  641.37     &    \\
$C_{2}$ &  138.49 &\\
$C_{3}$  &  204.50 & \\
$C_{4}$ & 198.64 &\\
\hline
Number of balls/tiles & 16 \& 16 & 1 \& 1 \\
Pattern length & 8  & 7  \\
\hline
CPU time & 29 \& 4203  seconds & <0.1 \& 329 seconds \\ \hline
  \end{tabular}
\label{table:PE}
\caption{Numerical results for the polynomial example example.}
 \end{table}

For Euler and {\em DynIBEX}, the table indicates {\em two} CPU times corresponding to the reachability from $R_1$
to $R_2$ and vice versa.
On this example, the Euler-based method is much slower than {\em DynIBEX}.
 
 \subsection{Two-tank system}
 
 The two-tank  system  is a linear example taken from \cite{hiskens2001stability}. The system consists of two tanks and two valves.
 The first valve adds to the inflow of tank 1 and the second valve is a drain valve for tank 2. 
 There is also a constant outflow from tank 2 caused by a pump. The system is linearized at a desired
 operating point. The objective is to keep the water level in both tanks 
 within limits using a discrete open/close switching strategy for the valves. 
 Let the water level of tanks 1 and 2 be given by $x_1$ and $x_2$ respectively. 
 The behavior of $x_1$ is given by $\dot x_1 = -x_1 - 2$ when the tank 1 valve is closed, 
 and $\dot x_1 = -x_1 + 3$ when it is open. Likewise,
 $x_2$ is driven by $\dot x_2 = x_1$ when the tank 2 valve is closed and $\dot x_2 = x_1 - x_2 - 5$ when it 
 is open. 
On this example, the Euler-based method works better than {\em DynIBEX}
in terms of CPU time.

 \begin{table}[h]
 \centering
\begin{tabular}{|c|c|c|}
 \hline 
 &\multicolumn{1}{c|}{Euler} & \multicolumn{1}{c|}{DynIBEX} \\
\hline
$R$ & \multicolumn{2}{c|}{$[-1.5,2.5]\times[-0.5,1.5]$} \\
$S$ & \multicolumn{2}{c|}{$[-3,3]\times[-3,3]$} \\
\hline
$\tau$ & \multicolumn{2}{c|}{0.2} \\
\hline
Time subsampling & $\tau/10$ & \\
\hline
Complete control & Yes & Yes \\
\hline
$\lambda_1$  & 0.20711     &       \\
$\lambda_2$  &  -0.50000 &\\
$\lambda_3$  &  0.20711 &  \\
$\lambda_4$ &  -0.50000 & \\
$C_{1}$  &  11.662    &             \\
$C_{2}$ & 28.917&\\
$C_{3}$  &  13.416 &\\
$C_{4}$ & 32.804& \\
\hline
Number of balls/tiles & 64 & 10 \\
Pattern length & 6 & 6 \\
\hline
CPU time & 58 seconds & 246 seconds\\ \hline
  \end{tabular}
\label{table:TT}
\caption{Numerical results for the two-tank example.}
 \end{table}

 \subsection{Helicopter}
 
 The helicopter is a linear example taken from \cite{ding2011reachability}. The problem is to control a quadrotor helicopter toward 
 a particular position on top of a stationary ground vehicle, while satisfying constraints 
 on the relative velocity. 
Let $g$ 
 be the gravitational constant, $x$ (reps. $y$) the position 
 according to $x$-axis (resp. $y$-axis), $\dot x$ (resp. $\dot y$) the velocity according to $x$-axis (resp. $y$-axis),
 $\phi$ the pitch command and $\psi$ the roll command. 
 The possible commands for the pitch and the roll are 
 the following: $\phi,\psi \in \{ -10,0,10 \}$.
 Since each mode corresponds to a pair $(\phi,\psi)$, there are nine switched modes.
 The dynamics of the system is given by the equation:
 $$ \dot X = \begin{pmatrix}
              0 & 1 & 0 & 0 \\ 
              0 & 0 & 0 & 0 \\ 
              0 & 0 & 0 & 1 \\               
              0 & 0 & 0 & 0               
              \end{pmatrix} X + \begin{pmatrix}
              0 \\ g\sin(-\phi) \\ 0 \\ g\sin(\psi) \end{pmatrix}              
 $$
where $X = ( x \ \dot x \ y \ \dot y)^\top$. Since the variables $x$ and $y$
are decoupled in the equations and follow the same equations (up to the sign of the command), it suffices
to study the control for $x$ (the control for $y$ is the opposite).
On this example again, the Euler-based method works better than {\em DynIBEX}
in terms of CPU time.

 \begin{table}[h]
 \centering
\begin{tabular}{|c|c|c|}
   \hline 
   &\multicolumn{1}{c|}{Euler} & \multicolumn{1}{c|}{DynIBEX} \\
   \hline
   $R$ & \multicolumn{2}{c|}{$[-0.3,0.3]\times[-0.5,0.5]$} \\
   $S$ & \multicolumn{2}{c|}{$[-0.4,0.4]\times[-0.7,0.7]$} \\   
\hline
$\tau$ & \multicolumn{2}{c|}{0.1} \\
\hline
Time subsampling & $\tau/10$ & \\   
   \hline
 Complete control & Yes  & Yes \\
\hline
$\lambda_1$  & 0.5    &        \\
$\lambda_2$  &  0.5 &\\
$\lambda_3$  &  0.5 &  \\
$C_{1}$  &  1.77535 &                     \\
$C_{2}$ & 0.5 &\\
$C_{3}$  &   1.77535& \\
\hline
Number of balls/tiles & 256 & 35 \\
Pattern length & 7 & 7 \\
\hline
CPU time &  539 seconds & 1412 seconds\\ \hline
  \end{tabular}
\label{table:HM}
\caption{Numerical results for the helicopter motion example.}
 \end{table}

\subsection{Analysis and comparison of results}

Our method presents the advantage over the work of \cite{NL_minimator} that
no numerical integration is required for the control synthesis. 
The computations just require the evaluation of given 
functions $f_j$ and
(global error) functions $\delta_j$ at sampling times. The synthesis 
is thus 
{\em a priori} cheap compared to the use of numerical integration schemes (and even compared to
exact integration for linear systems).
However, most of the computation time is actually taken by the search
for an appropriate radius $\delta$
of the balls $B_i$ ($1\leq i\leq m$) that cover $R$, and the search for appropriate 
patterns $\pi_i$ that make the trajectories issued from $B_i$ return to~$R$.

We observe on the examples that the resulting control strategies synthesized by our method are quite different from those obtained by the Runge-Kutta method of \cite{NL_minimator} (which uses in particular rectangular tiles instead of balls). 
This may explain why the experimental results are here contrasted:
Euler's method works better on 3 examples and worse on the 2 others.
Besides the Euler method fails on one example (DC-DC converter) while {\em DynIBEX} succeeds 
on all of them. Note however that our Euler-based implementation is made
of a few hundreds lines of interpreted code Octave while {\em DynIBEX} is made of around five thousands of compiled code C++.

\section{Final Remarks}\label{sec:fr}
We have given a new Euler-based method for controlling sampled switched systems,
and compared it with the Runge-Kutta method of \cite{NL_minimator}. The method is remarkably simple and gives already promising results. In future work, we plan to explore
the use of the {\em backward} Euler method instead of the forward Euler method used here
(cf: \cite{Beyn2010}).
We plan also to give general sufficient conditions ensuring the convexity
of the error function $\delta_j(\cdot)$; this would allow us to get rid of
the convexity tests that we perform so far numerically for each pattern.

\vspace{1em}
{\bf Acknowledgement.}
We are grateful to Antoine Girard, Jonathan Vacher, Julien Alexandre dit Sandretto and Alexandre Chapoutot for numerous helpful discussions.
This work has been partially supported by Federative Institute Farman (ENS Paris-Saclay and CNRS FR3311).

\bibliographystyle{eptcs}
\bibliography{minimator}

\end{document}